\newcommand{\dt}{\Delta t}
\newcommand{\diag}{\textnormal{diag}}
\newcommand{\Na}{\text{Na}^+}
\newcommand{\K}{\text{K}^+}
\newcommand{\blue}[1]{{#1}}
\renewcommand{\red}{{}}
\title{Explicitly Solvable Continuous-time Inference for Partially Observed Markov Processes}
\author{Daniel Chen, Alexander G.~Strang, Andrew W.~Eckford \IEEEmembership{Senior Member, IEEE}, Peter J.~Thomas 
\thanks{This work was supported in part by National Institutes of Health BRAIN Initiative grant R01 NS118606 and a National Science Foundation grant DMS-2052109 to PJT, as well as research support from the Oberlin College Libraries, and an NSERC Discovery grant to AWE.} 
\thanks{D. Chen and P.~J. Thomas are  with the Department of Mathematics, Applied Mathematics, and Statistics; Department of Electrical, Control and Systems Engineering; Department of Computer and Data Science; Department of Biology, Case Western Reserve University, Cleveland, OH 44106 USA (e-mail: txc461/pjthomas@case.edu).}
\thanks{A.~G.~Strang is with Department of Statistics, University of Chicago, Chicago, IL 60637 USA (email: alexstrang@uchicago.edu).} 
\thanks{A.~W.~Eckford is with Department of Electrical Engineering and Computer Science, York University, Toronto, ON M3J 1P3 Canada (e-mail: aeckford@yorku.ca).} }
\begin{document}
\maketitle

\begin{abstract}
Many natural and engineered systems can be modeled as discrete state Markov processes.  Often, only a subset of states are directly observable.
Inferring the conditional probability that a system occupies a particular hidden state, given the partial observation, is a problem with broad application.
In this paper, we introduce a continuous-time formulation of the sum-product algorithm, which is a well-known discrete-time method for finding the hidden states' conditional probabilities, given a set of finite, discrete-time observations. 
From our new formulation, we can explicitly solve for the conditional probability of occupying any state, given the transition rates and observations within a finite time window.
We apply our algorithm to a realistic model of the cystic fibrosis transmembrane conductance regulator (CFTR) protein for exact inference of the conditional occupancy probability, given a finite time series of partial observations.
\end{abstract}

\section{Introduction}

    

Markov processes---dynamic processes whose future behavior depends only on their present state---approximate a wide variety of natural and engineered systems.
Despite rapid advances in high-throughput data acquisition and data processing, many systems of interest contain important degrees of freedom that cannot be directly observed.
Inferring the conditional probability that such a \textit{partially observed Markov process} occupies specific hidden states, given the available observations, is a ubiquitous problem in science and engineering.
Examples appear in robotics \cite{grisettiyz2005improving}, ecology \cite{baum1967inequality},  neuroscience \cite{anderson2015stochastic}, and algorithmic text analysis \cite{blei2003latent}.

We are motivated by biological examples in the present paper.  
Ion channels in excitable membranes, such as the sodium ($\Na$) and potassium ($\K$) channels described in Hodgkin and Huxley's quantitative model for action potential generation  
\cite{hodgkin1952quantitative,skaugen1979firing,hilleionic}, provide an early example.  
Discrete state Markov models based on Hodgkin and Huxley's $\K$ channel contain five states, only one of which conducts an ionic current; the other states are ``silent" and cannot be distinguished by direct electrophysiological observation. 
Similarly, the $\Na$ channel has eight states: seven with zero conductance and one with nonzero conductance.
Colquhoun and Hawkes introduced maximum likelihood methods for inferring the rate constants of a partially observed Markov process representing the nicotinic Acetylcholine receptor \cite{colquhoun1977relaxation,colquhoun1981stochastic,colquhoun1982stochastic,qin1997maximum}, but did not address the question of inferring microscopic state occupancy from observable conductance time series.  
More recently, research into the molecular biology of cystic fibrosis (CF) has focused on the CF transmembrane conductance regulator (CFTR), which can be modeled as a 7-state system with two conducting states and five nonconducting states \cite{fuller2005block} (detailed below in Section \ref{sec:apps}). Beyond these biological examples, problems of inferring or estimating hidden states from incomplete observations are widely studied in the signal processing literature \cite{loeliger2007factor}.

The literature contains several approaches to approximating the behavior of hidden states of partially observed Markov processes.
Sampling provides one common technique for approximate inference
\cite{koller_friedman_2012}. 
As an example, recent work by Fang \textit{et al.}~demonstrated an efficient algorithm for simulating stochastic reaction networks with multiple separated time scales using particle filters \cite{fang2022stochastic}. 
In general, Markov Chain Monte Carlo is a widely employed sampling technique used to infer hidden states
\cite{cappe2003reversible, djuric1999estimation}\blue{, that has also been applied to ion channels \cite{rosales2001bayesian}.}
For partially observed Bayesian networks operating in discrete time, message passing algorithms on factor graphs provide an efficient and exact inference method  \cite{kschischang2001factor}. 
\red{The factor-graph formalism is highly flexible.} 
Algorithms based on the message passing concept have been extended to applications \red{in localization \cite{jin2020bayesian}, compressed sensing \cite{som2012compressive}, and decision fusion \cite{abrardo2017message}}.
\blue{Factor graphs are not limited to models with a discrete number of variables. 
For example, Gaussian message passing in linear Gaussian models (\textit{e.g.}~Kalman filtering and smoothing) has been developed for continuous-time models with discrete-time observations \cite{bolliger2012digital, bolliger2013lmmse, bruderer2014estimation}. 
However, the state reconstruction problem for continuous-time finite-state hidden Markov models has not been addressed in the literature, to be best of our knowledge.}

\blue{In this work, we extend the message-passing algorithm in order to analytically interpolate state-occupancy probabilities of a continuous-time system, given a discretely sampled time series.}
That is, we show how to infer the time-dependent conditional probabilities of latent states for continuous-time discrete-state \blue{homogeneous} Markov processes given a set of partial observations over a finite time window. 
We derive an equivalent formulation of the sum-product algorithm in continuous time that allows one to find an explicit analytic solution for the state occupancy probability. \red{Having explicit solutions lowers the computational cost and, unlike sampling-based approximate methods, does not sacrifice accuracy.}
\red{Furthermore, the continuous-time formalism leads to elegant simplification of the analytic solutions.}
For certain systems---like the three-state systems shown in Figure \ref{fig:three-state}---the conditional probability obeys a second-order inhomogeneous linear ordinary  differential equation. Finally, we demonstrate the practical functionality of the algorithm with the 7-state model for CFTR using simulated data. 

The paper is organized as follows. Section \ref{sec:theory} reviews the message-passing algorithm for Markov processes with binary observations. 
Section \ref{sec:main} displays the main result of the paper: a continuous-time formulation of the sum-product algorithm. 
\blue{We present the derivation of the conditional probability using the message-passing approach in continuous time, and the corresponding sum-product algorithm.}
Section \ref{sec:analytic} discusses the implications of the continuous-time formulation further. 
\red{We give examples of small systems to display how analytic solutions may be found. 
In Section \ref{sec:apps} we demonstrate the value of the algorithm for a larger, realistic system.

\section{Theory of the Sum-Product Algorithm}
\label{sec:theory}
The sum-product algorithm can be used generally for inference on probabilistic models that can be written as factor graphs \cite{kschischang2001factor}. There are many variants of the sum-product algorithm, each suitable for accomplishing a different task. For our purposes, we will focus on the forward/backward algorithm for inference on hidden Markov models. 


We consider a continuous-time, discrete-state homogeneous Markov process on a finite state space $\Omega$.
Given a discrete, uniformly spaced sampling interval, the continuous-time process induces a discrete-time Markov process specified by some column-stochastic transition matrix $P$ \blue{that is invariant in time}.
Let $\mathcal{S} \subset \Omega$ be a subset of states, and let $S_t \in \Omega$ be the state of the system at time $t$. 
We assume an observer can only see whether $S_t$ is in $\mathcal{S}$ or not.
Accordingly, let $\mathcal Y_t = m(S_t)$ represent the observable where $m(s): \Omega \rightarrow \{0,1\}$ is the indicator function for the set $\mathcal{S}$.
The goal of the algorithm is to infer the conditional probability of being at a particular state, $i \in \Omega$, given the binary observation. 

The algorithm involves three \red{vector-valued} quantities: a forward message $\bm \alpha_t$, a backward message $\bm \beta_t$, and the \red{observation message} $\bm \chi_t$. 
One can interpret the forward message as the probability of arriving at a certain state from time $0$ to time $t$ and the backward message as \blue{the likelihood} of occupying a certain state at time $t$ \red{conditioned on ending up in a given target state, or a given target set of states, at the end} of the measurement $T$. 
On the other hand, the \red{observation message} $\bm \chi_t$ is an indicator function of the possible states given the observation. 
For instance, if at time $t$, the observation of the total system were 
$\mathcal{Y}_t=1$, then $\bm \chi_t$ would be
a vector with $1$'s on the states in \blue{the observation set} $\mathcal S$ and $0$ otherwise. 
\red{Using superscripts to denote vector indices, \textit{i.e.} $\bm v \ind i$ denotes the $i$-th element of vector $\bm v$, we} present pseudo-code for the sum-product algorithm in Algorithm \ref{alg:FBA}. 
\blue{Note that, the normalization constant $Z = \sum_k \alpha_t \ind k \beta_t \ind k$ in Line \ref{line:normalize} of Algorithm \ref{alg:FBA} is time invariant \cite{forney2011partition}. }

\begin{algorithm}
\caption{Forward/Backward Algorithm}
\label{alg:FBA}
\begin{algorithmic}[1]
\Require{The transition matrix $P \in \R^{n \times n}$, the \red{observation message} $\bm \chi_t$ for $t \in \{1, \dots, T\}.$}
\Ensure{The inferred probability $\bm p_t$ for $t \in \{1, \dots, T\}$.}
\State Initialize the forward message $\bm \alpha_0 = \bm \pi$, the stationary distribution of $P$ 
\State \blue{Initialize the backward message $\beta_T\ind i = 1$, for all $i$}
\For {t from $2, \dots, T$} 
    \State $\bm \alpha_t = \diag(\bm \chi_t)P\bm \alpha_{t-1}$ 
\EndFor
\For {s from $T-1, \dots, 1$}
    \State \blue{$\bm \beta_s = P^{\intercal} \diag(\bm \chi_{s+1})\bm \beta_{s+1}$}
\EndFor
\For {t from $1, \dots, T$}
    \State $\bm p_t = \frac{\bm \alpha_t \odot \bm \beta_t}{\sum_k  \alpha_t \ind k \beta_t \ind k}$ \label{line:normalize}
\EndFor
\end{algorithmic}
\end{algorithm}

Fig.~\ref{fig:dis_to_cont} illustrates the sum-product algorithm's application to a time series of discretely sampled observations.
Consider a three-state-chain with symmetric transition rates as shown in the top-left of Figure \ref{fig:dis_to_cont}. 
If the system takes states 1 or 2, ``0'' will be observed, and ``1'' will be observed otherwise. 
Using the sum-product algorithm, we can find the probability of the system occupying each state given the discrete-time observations. 
As shown in the bottom row, as the sampling time step decreases, the conditional probabilities appear to converge to a smooth curve within each interval with a fixed observation (either ``0'' or ``1''). 
Intuitively, there should exist a continuous, perhaps piecewise differentiable, representation of the conditional probability of a partially observed process.  
We formalize this intuition below.


\begin{figure*}
    \centering
    \includegraphics[width=\textwidth]{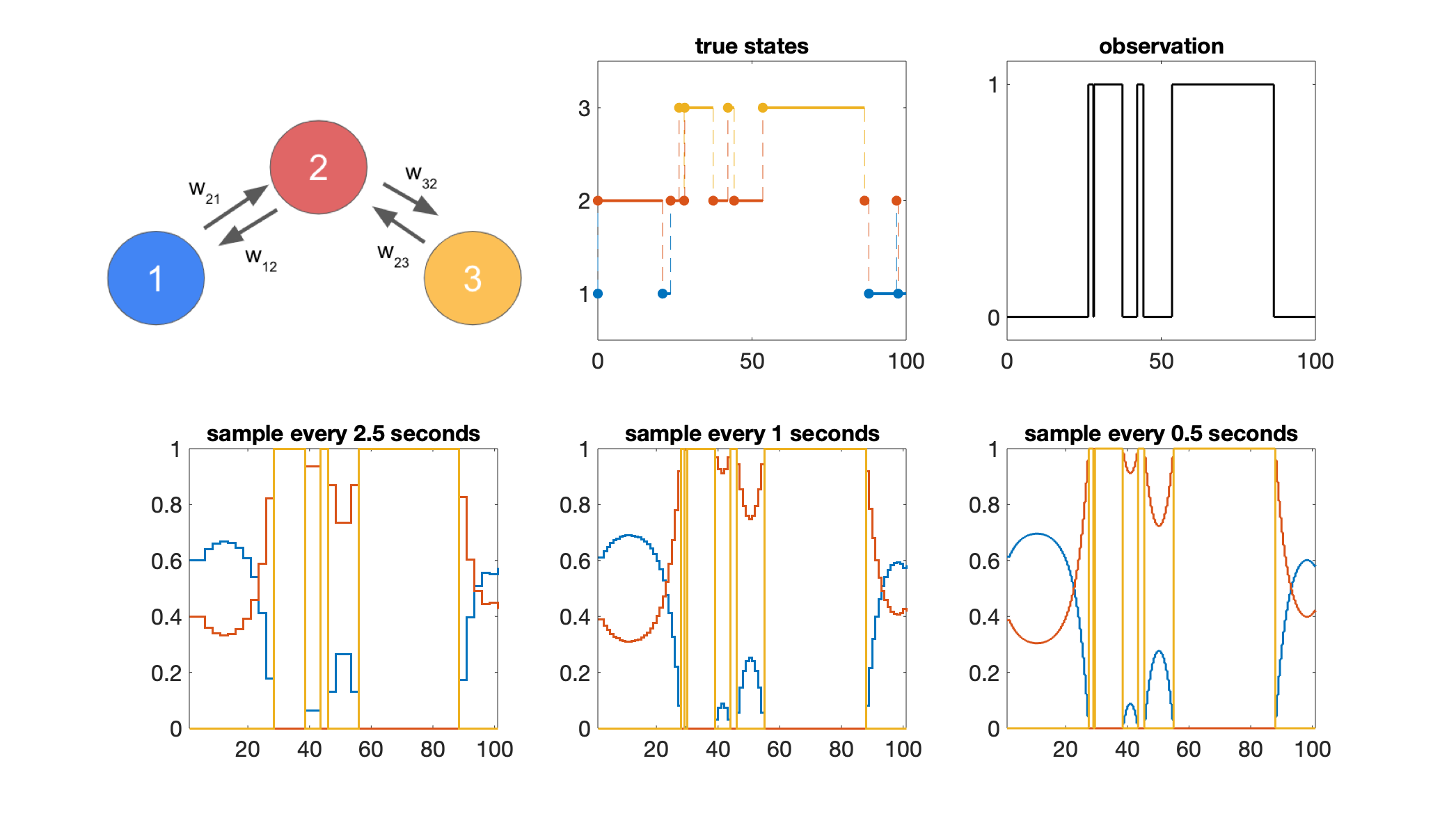}
    \caption{Illustrating convergence of conditional state occupancy probabilities to a differentiable function for a three-state model.
    (\textbf{Top left}) The state diagram. (\textbf{Top middle}) True simulated states. (\textbf{Top right}) Binary observation derived from true states. (\textbf{Bottom row}) Inference of hidden states via the sum-product algorithm with time steps 2.5 sec (\textbf{Bottom Left}), 1.0 sec (\textbf{Bottom Middle}) and  0.5 sec (\textbf{Bottom Right}).}
    \label{fig:dis_to_cont}
\end{figure*}

\section{Continuous-time Message Passing}\label{sec:main}

In this section, we present the main result, namely the derivation of the continuous-time message passing algorithm. In this new formalism, messages are passed in the form of linear differential equations on possible states given the observable system. In order to guarantee the existence of the continuous sum-product algorithm, we assume the following conditions.


\textbf{Assumptions:}
\begin{itemize}
    \item[A1] The continuous-time  process $\{S(t); ~t\in[0,T]\}$ takes values in a finite state space $\Omega=\{1,2,\dots,N\}$.
    \item[A2] $S(t)$ has the Markov property, and has exponentially distributed waiting times parameterized by a rate matrix $W$, with $w_{ji}$ specifying the transition rate from state $i$ to state $j$. \blue{Note that $W$ is constant in the interval $[0,T]$}.
    \item[A3] There is a distinguished subset $\mathcal{S}\subset \Omega$ such that the observable process $\mathcal{Y}(t)$ satisfies
    \begin{align*}
    \mathcal Y(t) = \begin{cases}
        1 & \IF S(t) \in \mathcal S, \\ 0 & \OW.
    \end{cases}
\end{align*}
\end{itemize}
\red{Further details appear in Appendix \ref{sec:prelim}.}

Under these assumptions, we obtain a continuous-time version of the sum-product algorithm by executing the following steps (made rigorous in the proof of Theorem \ref{thm:main} below).
Write out the matrix multiplication of the discrete-time algorithm element-wise. Focus on one sojourn where the observation doesn't change. Within that time interval, take the limit as the time step goes to zero to derive the continuous-time dynamics of the conditional probabilities. 
Extend the solution to the full time interval via appropriate boundary conditions at the transition between each sojourn.
The main result is stated in the theorem below.
    
\begin{theorem}\label{thm:main}
Suppose processes $S(t)$ and the associated process $\mathcal Y(t)$ satisfies assumptions A1, A2, and A3 above.
Then, given a realization of the process $\mathcal Y(t)$, the conditional probability $\bm p(t) = \Pr[S(t) | \mathcal Y(t)]$ exists, is piecewise smooth ($C^\infty$), and is $C^\infty$ on all intervals where $\mathcal Y(t)$ is constant. In particular, $\bm p(t)$ takes the form
\begin{align}
    \bm p(t) = \frac{\bm \rho(t)}{\sum_k \rho\ind k (t)} = \frac{\bm \alpha(t) \odot \bm \beta(t)}{\sum_k \alpha\ind k(t) \odot \beta\ind k(t)}
\end{align}
where $\odot$ denotes the \red{element-wise} product, \red{and} the quantities $\bm \alpha(t)$ and $\bm \beta(t)$ \red{are functions of time that} follow the linear ordinary differential equations
\begin{align}
    \fod {\alpha\ind i(t)} t &= \sum_{k \in \mathcal S \setminus \{i\}} w_{ki}\alpha\ind k(t) - \sum_{l \neq i} w_{il}\alpha\ind i(t) , \label{eq:forwarddiff}\\
    \fod {\beta\ind j(t)} t &= -\sum_{k \in \mathcal S \setminus \{j\}} w_{jk}\beta\ind k(t) + \sum_{l \neq j} w_{jl}\beta\ind j(t) . \label{eq:backwarddiff}
\end{align}
\end{theorem}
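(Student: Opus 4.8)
The plan is to obtain $\bm p(t)$ as the limit, as the sampling step $h\to 0$, of the output of Algorithm \ref{alg:FBA} applied to the induced chain $P_h = e^{Wh} = I + Wh + O(h^2)$, following the roadmap sketched just before the theorem; this simultaneously delivers existence of $\bm p(t)$, the product form, the ordinary differential equations \eqref{eq:forwarddiff}--\eqref{eq:backwarddiff}, and the regularity claims. First I would fix a realization of $\mathcal Y(t)$ on $[0,T]$: since $S(t)$ is a finite-state homogeneous CTMC it almost surely has finitely many jumps, so $\mathcal Y$ is a step function with finitely many switching times $0 = t_0 < t_1 < \cdots < t_m < t_{m+1} = T$, and on each open interval $I_r = (t_r, t_{r+1})$ the set $\mathcal A_r$ of states consistent with the observation equals $\mathcal S$ or $\Omega\setminus\mathcal S$. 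The argument then splits into an interior analysis on each $I_r$ and a matching analysis at the switching times.

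For the interior, I would write the forward recursion componentwise as $\bm\alpha_{k+1} = \diag(\bm\chi)\,(I + Wh + O(h^2))\,\bm\alpha_k$; because $\diag(\bm\chi)$ annihilates every coordinate outside $\mathcal A_r$, after one step the iterate is supported on $\mathcal A_r$, and on that support the recursion is exactly a consistent explicit-Euler scheme for $\dot{\bm\alpha} = \diag(\bm\chi)\,W\,\bm\alpha$. Restricting $\diag(\bm\chi)W$ to the support $\mathcal A_r$ and using that the $i$-th diagonal entry of $W$ is $-\sum_{l\neq i} w_{il}$, the total rate of leaving $i$, produces exactly \eqref{eq:forwarddiff}, and the identical computation on $\bm\beta_s = P^{\intercal}\diag(\bm\chi_{s+1})\bm\beta_{s+1}$ produces the adjoint system \eqref{eq:backwarddiff}. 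Standard convergence of explicit Euler for a linear constant-coefficient system then gives $\bm\alpha^{(h)}\to\bm\alpha$ and $\bm\beta^{(h)}\to\bm\beta$ uniformly on compact subsets of $I_r$; since the limits solve constant-coefficient linear systems they are real-analytic, hence $C^\infty$, on $I_r$, and vanish on $I_r\setminus\mathcal A_r$. One technical point to handle: the unnormalized messages pick up an $O(h)$ factor each time the observation switches (crossing a switching time forces one transition into the $O(h)$-width window between samples), so to get nondegenerate limits I would rescale $\bm\alpha^{(h)}$ and $\bm\beta^{(h)}$ by a positive scalar that is constant on each $I_r$; this is harmless because \eqref{eq:forwarddiff}--\eqref{eq:backwarddiff} are homogeneous and linear and because $\bm p$ in Line \ref{line:normalize} depends on $\bm\alpha\odot\bm\beta$ only up to overall scale.

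For the matching, as $h\to 0$ the transition factor linking the last sample of $I_{r-1}$ to the first sample of $I_r$ tends to the identity, so the only surviving effect of crossing $t_r$ is a projection onto $\mathcal A_r$ of the (rescaled) flux that $W$ sends across the interface, which supplies the initial data for \eqref{eq:forwarddiff} on $I_r$; together with $\bm\alpha(0)=\bm\pi$ and $\bm\beta(T)=\bm 1$ this pins down $\bm\alpha$ and $\bm\beta$ on all of $[0,T]$ and gives the asserted piecewise-$C^\infty$ regularity, with non-smoothness confined to the finitely many $t_r$. Passing to the limit in Line \ref{line:normalize}, and using that the discrete partition function (computed from the step-$h$ messages) is $t$-invariant \cite{forney2011partition} and, after the per-sojourn rescaling, converges to a positive constant on each sojourn, gives $\bm p^{(h)}_t \to \bm\rho(t)/\sum_k\rho^{(k)}(t)$ with $\bm\rho = \bm\alpha\odot\bm\beta$. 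It then remains to identify this limit with $\Pr[S(t)\mid\mathcal Y(t)]$: the discrete algorithm computes exactly $\Pr[S(t)\mid\mathcal Y(t_0),\dots,\mathcal Y(t_n)]$, and along a nested sequence of meshes the conditioning $\sigma$-algebras increase to $\sigma(\mathcal Y(s):s\in[0,T])$ by right-continuity of $\mathcal Y$, so L\'evy's upward convergence theorem yields a.s.\ convergence of these conditional probabilities to $\Pr[S(t)\mid\mathcal Y]$; matching the two limits establishes the stated form.

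I expect the main obstacle to be this last identification — rigorously interchanging the $h\to 0$ limit with the conditional expectation and controlling the discrepancy between conditioning on the finitely many sampled values $\{\mathcal Y(t_j)\}$ and on the whole path $\{\mathcal Y(s):s\in[0,T]\}$ — together with the null-set exceptions one must exclude, such as a jump of $\mathcal Y$ landing exactly on a sample point, or a degenerate rate matrix making the interface flux or the partition function vanish. By contrast, the Euler-scheme convergence and the bookkeeping of how the message supports collapse onto $\mathcal A_r$ and are re-injected across each switching time I expect to be routine, if slightly tedious.
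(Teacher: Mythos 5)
Your proposal follows essentially the same route as the paper: discretize, recognize the sum--product recursion on each constant-observation sojourn as a first-order-accurate scheme for the ODEs driven by the truncated rate matrix, pass to the limit $h\to 0$, and re-inject the $O(h)$ cross-interface flux (suitably rescaled or normalized) as the initial data for the next sojourn, which is exactly the content of the paper's boundary updates \eqref{eq:forward-update} and \eqref{eq:backward-update}. The two places where you go beyond the paper --- casting the interior limit as explicit-Euler convergence rather than a formal difference quotient, and invoking L\'evy's upward theorem to identify the limit of the discretely conditioned probabilities with $\Pr[S(t)\mid\mathcal Y]$, a step the paper silently omits --- are refinements of, not departures from, its argument.
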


\begin{proof}
Without loss of generality, focus on the case where 
$\mathcal Y(0) = \mathcal Y(T) = 0$, and $\mathcal{Y}(t)=1$ for $0 < t < T$.
\red{We use $q(t)$ to denote a quantity, $q$, evolving in continuous time on the interval $[0,T]$, and $q_t$ to denote the same process sampled at discrete times.}

Let the time interval $[0,T]$ be discretized with a step size $\dt=T/n$ for some integer $n \gg 1$.
At each time step, $S(t)$ is sampled. Then, the sum-product algorithm can be used to solve for $\bm p_t$. Writing the matrix multiplication out yields the following set of equations for the forward and backward messages in discrete time:
\begin{align}
    \alpha_{t+\dt} \ind i &= \sum_k \Pr[s_{t+\dt} = i | s_t = k] \alpha_t\ind{k} \blue{ \chi_t\ind{i} }, \\
    \beta_t \ind i &= \sum_k \Pr[s_{t+\dt} = k | s_t = i] \beta_{t+\dt}\ind{k} \blue { \chi_{t+\dt} \ind{k} }.
\end{align} 
We neglect states not in $\mathcal S$ because only the probability conditioned on the observations is of interest. 
Then, for any state $i \in \mathcal S$, we argue that the corresponding forward message, $\bm \alpha\ind i(t)$, and backward message, $\bm \beta \ind i(t)$ in continuous time can be written as solutions of systems of differential equations, upon taking limits as $\dt\to 0$. 
For notational simplicity, for $t>\tau$ we let $\mathcal P_{t,\tau}\ind{i,j} = \Pr[S(t) = i | S(\tau) = j]$.
\begin{align}
    \nonumber\lefteqn{\fod {\alpha\ind{i}(t)} t }&\\
    &= \lim_{\dt\to 0} \frac{\alpha\ind{i}({t + \dt}) - \alpha\ind{i}(t)}{\dt} \\
    &= \lim_{\dt\to 0} \frac{1}{\dt} \Bigg[ \sum_k \mathcal P_{t+\dt,t}\ind{i,k} \alpha\ind{k}(t) \chi\ind{i}(t) - \alpha\ind{i}(t)\Bigg] \\
    \nonumber &= \lim_{\dt\to 0} \frac{1}{\dt} \biggr[ \sum_{k \in \mathcal S \setminus \{i\}} (w_{ki}\dt + o(\dt))\alpha\ind{k}(t) \ldots\\
    &\quad\quad\quad\quad\quad\quad - \sum_{l\neq i} (w_{il}\dt + o(t)) \alpha\ind i(t) \biggr] \\
    &= \sum_{k \in \mathcal S \setminus \{i\}} w_{ki}\alpha\ind k(t) - \sum_{l \neq i} w_{il}\alpha\ind i(t) 
\end{align}
\begin{align}
    \nonumber\lefteqn{\fod {\beta\ind{i}(t)} t}&\\ 
    &= \lim_{\dt\to 0} \frac{\beta \ind i (t + \dt) - \beta\ind{i}(t)}{\dt} \\
    \nonumber &= \lim_{\dt\to 0} \frac{1}{\dt} \Bigg[\beta\ind{i}({t+\dt})\ldots \\
    &\:\:\:\: - \sum_k \mathcal P_{t+\dt,t}\ind{k,i} \beta\ind{k}({t+\dt}) \blue{ \chi\ind{k}(t+\dt) } \Bigg] \\
    \nonumber &= \lim_{\dt\to 0} \frac{1}{\dt} \Bigg[ - \sum_{k \in \mathcal S \setminus \{i\}} (w_{ik}\dt + o(\dt))\beta\ind{k}({t+\dt})\\
    &\:\:\:\: + \sum_{l\neq i} (w_{il}\dt + o(t)) \beta\ind i({t+\dt}) \Bigg] \\
    &= -\sum_{k \in \mathcal S \setminus \{i\}} w_{ik}\beta\ind k(t) + \sum_{l \neq i} w_{il}\beta\ind i(t) 
\end{align}
\red{where $\lim_{\dt \to 0} o(\dt)/\dt = 0$. Readers could refer to Appendix \ref{sec:prelim} or consult existing literature such as \cite{gallager2013stochastic} for the relationship between transition probabilities and transition rates of a Markov jump process.} 

The conditional probability can be found by solving the differential equations, taking the component-wise product of $\bm \alpha (t)$ and $\bm \beta (t)$ for every $t \in [0,T]$, and normalizing so as to obtain a valid probability distribution. 

The differential equation formulation is only applicable \red{for the time intervals where the observation $\mathcal Y(t)$ is constant}. 
When the \red{observable $\mathcal Y(t)$} changes (when the systems transitions from a state in $\mathcal S$ to a state not in $\mathcal S$), the probability with respect to time might not be differentiable; in some cases, it is not even continuous. 
Therefore, we must specify boundary conditions to connect the probabilities from one sojourn to the next. 
The observable may change either by the system entering $S$ or else leaving $S$.
Suppose a transition occurred within time $(t_*-\dt, t_*]$ such that, for $t < t_*-\dt$, $S(t) \in \mathcal S$, and $S(t) \not \in \mathcal S$ for $t \geq t_*$. 
Call this event $E$. 
Then, we obtain the following transition rule for the forward message.
\begin{align}
    \Pr[&S(t_*) = j | E] \nonumber \\
    &= \dfrac{\sum_{i \in \mathcal S} \mathcal P_{t_*,t_*-\dt}^{j,i} \Pr[S(t_* - \dt) = i]}{\sum_{k \not \in \mathcal S} \sum_{i \in \mathcal S} \mathcal P_{t_*,t_*-\dt}^{k,i}\Pr[S(t_* - \dt) = i]} \\
    &= \red{ \dfrac{\sum_{i \in \mathcal S} (w_{ji}\dt + o(\dt))\Pr[S(t_* - \dt) = i]}{\sum_{k \not \in \mathcal S} \sum_{i \in \mathcal S} (w_{ki}\dt + o(\dt))\Pr[S(t_* - \dt) = i]} }\\
    &\to \dfrac{\sum_{i \in \mathcal S} w_{ji}\Pr[S(t_*^-) = i]}{\sum_{k \not \in \mathcal S} \sum_{i \in \mathcal S} w_{ki}\Pr[S({t_{*}^{-}}) = i]} \label{eq:forward-update}
\end{align}
as $\dt \to 0$. 
Here $\Pr[S(t_*^-) = i]$ is the probability of occupying state $i$ the instant before the transition, which can be found by solving the differential equations introduced above. 

We handle the boundary conditions at state transitions for the backward message similarly.
Define $E$ as above and let $s^*$ be a particular goal state. Then:
\begin{align}
    \Pr[& S(T) = s^* | S(t_*-\dt) = j, E] \nonumber \\
    &= \sum_{i \not \in \mathcal S} \Pr[S(T) = s^* , S(t_*) = i | S(t_* - \dt) = j, E] \nonumber \\
    &= \dfrac{\sum_{i \not \in \mathcal S} \Pr[S(T) = s^* | S({t_{*}}) = i] \mathcal P_{t_*, t_*-\dt}^{i,j}}{\sum_{k \in \mathcal S} \sum_{i \not \in \mathcal S} \mathcal P_{T,t_*}^{s^*,i}\mathcal P_{t_*, t_*-\dt}^{i,k}} \\
    &= \dfrac{\sum_{i \not \in \mathcal S} \Pr[S(T) = s^* | S({t_{*}}) = i] (w_{ij}\dt + o(\dt))}{\sum_{k \in \mathcal S} \sum_{i \not \in \mathcal S} \mathcal P_{T,t_*}^{s^*,i}(w_{ik}\dt + o(\dt))} \\ 
    &= \dfrac{\sum_{i \not \in \mathcal S} \Pr[S(T) = s^* | S({t_{*}}) = i](w_{ij} + \frac{o(\dt)}{\dt})}{\sum_{k \in \mathcal S} \sum_{i \not \in \mathcal S} \mathcal P_{T,t_*}^{s^*,i}(w_{ik} + \frac{o(\dt)}{\dt})}  \\
    &\to \dfrac{\sum_{i \not \in \mathcal S} w_{ij}\Pr[S(T) = s^* | S({t_{*}}) = i]}{\sum_{k \in \mathcal S} \sum_{i \not \in \mathcal S} w_{ik}\Pr[S(T) = s^*| S({t_*}) = i]} \label{eq:backward-update}
\end{align}
as $\dt \to 0$. 
Here, $\Pr[S(T) = s^*| S(t_*) = i]$ is a hitting probability associated with the backward message at time $t_*$, which can be found by solving the backward-message differential equation. 
These boundary conditions, together with the differential equations \eqref{eq:forwarddiff}-\eqref{eq:backwarddiff}, give the continuous-time evolution of the conditional probability for any finite-length observations.

Note that the equations \eqref{eq:forwarddiff}-\eqref{eq:backwarddiff} extend to the case where $\mathcal Y(0) = \mathcal Y(T) = 1$ and $\mathcal Y(t) = 0$ for $0 < t < T$  by viewing $\mathcal S \gets \Omega \setminus \mathcal S$. So, given a time series observation $\mathcal Y(t)$ where observation (the value of $\mathcal Y(t)$) changes at time $0 < t_1 < t_2 < \dots t_m$, we can solve for the analytic solution at any interval with consistent observation $(t_i, t_{i+1}]$ using equation \eqref{eq:forwarddiff} and \eqref{eq:backwarddiff}. Then, use the result to compute the initial condition for the next interval ---  namely, $(t_{i+1}, t_{i+2}]$ for the forward message and $(t_{i-1}, t_i]$ for the backward message --- as specified in \eqref{eq:forward-update} and \eqref{eq:backward-update}. \blue{Thus, the statement of Theorem \ref{thm:main} holds.}
\end{proof}

A general expression for the conditional probability can be obtained, but it is not of great utility in most systems. 
Yet, there are certain special cases that yield elegant solutions; we introduce several examples  in Section \ref{sec:analytic}.

The continuous-time sum-product algorithm follows directly from the derivation above, and is outlined in Algorithm \ref{alg:CFBA}. 
The discrete algorithm passes information through matrix multiplication of a truncated transition matrix; 
the continuous-time algorithm does the same by solving a system of linear differential equations using the truncated rate matrix. 
Since the final conditional probability is only piecewise differentiable, the boundary condition must be applied whenever a transition in or out of the observable set $\mathcal S$ occurs. 

\begin{algorithm}
\caption{Continuous-time Forward/Backward Algorithm}\label{alg:CFBA}
\begin{algorithmic}[1]
\Require{The rate transition matrix $W \in \R^{n\times n}$, the observed process $\mathcal Y(t)$.}
\Ensure{The inferred probability $\bm p(t)$ for $t \in [0, T]$.}
\State Let $[t_1, t_2, \dots, t_m]$ be a list of times where transitions occur
\State $\tau \gets 0$ 
\State $\bm \alpha^* \gets \bm \pi$, the stationary distribution
\For {j from $1, \dots, m$} 
    \State $\bm \alpha_j \gets $ solution to the forward message differential equation (Equation \ref{eq:forwarddiff}) from $\tau$ to $t_j$ with initial condition $\bm \alpha^*$ 
    \State $\tau \gets t_j$ 
    \State $\bm \alpha^* \gets$ distribution specified according to Equation \ref{eq:forward-update}
\EndFor
\State $\tau \gets T $
\State $\bm \beta^* \gets $ the uniform distribution 
\For {j from $m, \dots, 1$} 
    \State $\bm \beta_j \gets $ solution to the backward message differential equation (Equation \ref{eq:backwarddiff}) from $\tau$ to $t_j$ with initial condition $\beta^*$ 
    \State $\tau \gets t_j$
    \State $\bm \beta^* \gets$ distribution specified according to Equation \ref{eq:backward-update}
\EndFor
\State $\bm \alpha(t), \bm \beta(t) \gets$ concatenation $\bm \alpha_j$'s and $\bm \beta_j$'s
\State Compute $\bm \rho(t) = \bm \alpha(t) \odot \bm \beta(t)$, the component-wise product between $\bm \alpha(t)$ and $\bm \beta(t)$ pointwise with respect to $t$ 
\State Compute the conditional probability $\bm p(t) = \dfrac{\bm \rho(t)}{\sum_k \rho\ind k (t)}$
\end{algorithmic}

\end{algorithm}

\red{From a practical perspective, having the ability to solve for the conditional probabilities exactly through differential equations drastically lowers the computational cost of the forward/backward algorithm. 
Traditionally, the discrete-time algorithm propagates the forward and backward messages through matrix operations \emph{at each time step}. 
For long time-series and/or high-dimensional systems, this is computationally prohibitive. 
Through our continuous-time formalism, we solve the differential equations analytically, which is an operation that is independent of the length of the time-series, to find the forward or backward message at any time point. 
This difference effectively reduces the asymptotic scaling from $\mathcal O(\dt \inv)$ to $\mathcal O(1)$, with the later scaling only in the number of observable transitions. 
In scenarios where finding the appropriate boundary condition would require an iterative procedure of solving the forward and backward messages multiple times, our continuous time approach should be much more efficient than the traditional discrete-time method. 
We discuss the performance of the continuous-time message-passing algorithm further in Section \ref{sec:apps}}.

\section{Analytic Solution}\label{sec:analytic}

Theorem \ref{thm:main} in the previous section shows that the conditional probability is always available analytically upon normalizing the component-wise product of the forward and backward messages, \blue{ \textit{i.e.} $\bm p(t) = \bm \rho(t)/Z$ where $Z = \sum_k \rho\ind k(t)$. }
\blue{As in the discrete-time case,  the normalizing term $Z$ is time-invariant in the continuous-time case as well.  See Appendix \ref{sec:thmtwoproof}.
The conditional probability may therefore be expressed in a particularly elegant form in certain cases, namely as} the solution of a linear nonhomogeneous second-order differential equation.
\blue{We begin this section by considering two examples.
Following the examples, we consider extensions to higher dimensions.}

\subsection{Symmetric 3-State Chain}

\begin{figure}
    \centering
    \includegraphics[width=\linewidth]{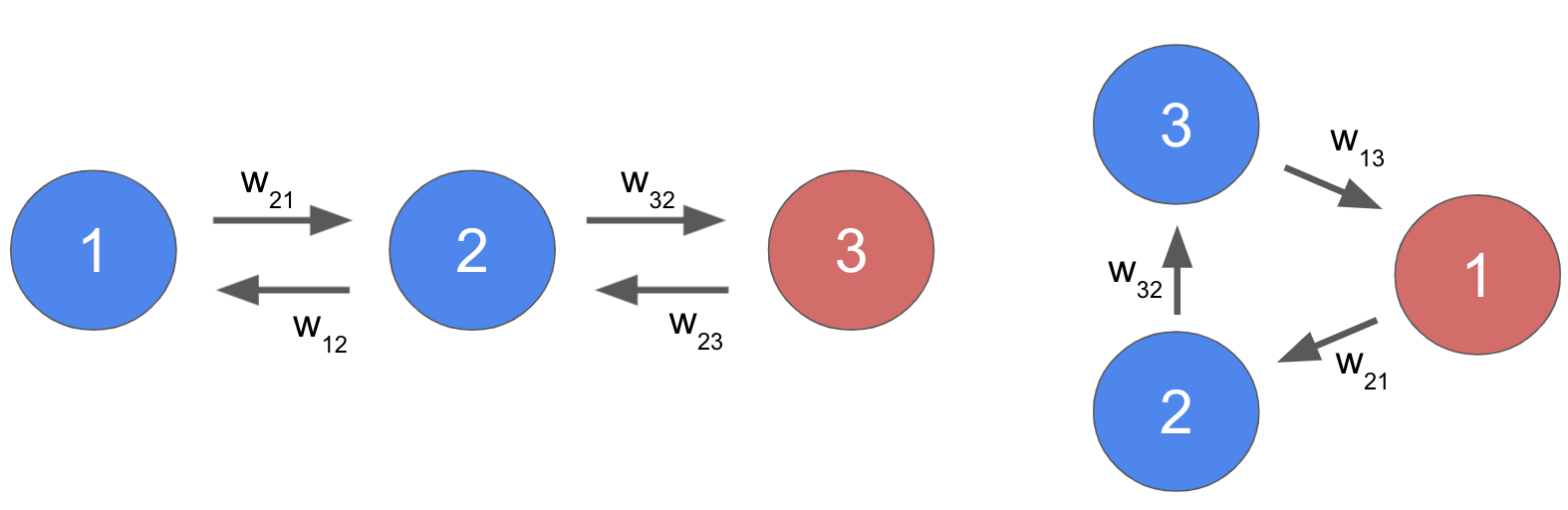}
    \caption{\blue{State diagrams of two systems for which the continuous-time message passing algorithm exhibit analytic simplifications}.
    \textbf{(Left)} 3-state chain with symmetric rates $w_{12}=w_{21}$.
    \textbf{(Right)} Irreversible 3-state loop. 
    States marked in red return $\mathcal{Y}(t)=1$ and blue return $\mathcal{Y}(t)=0$.
    }
    \label{fig:three-state}
\end{figure}

Consider the three-state chain depicted in the left panel of Figure \ref{fig:three-state}, where states 1 and 2 are hidden. 
Assume that the per-capita transition rates within the hidden block are symmetric, i.e.~$w_{12}=w_{21}>0$,
and assume $w_{13}=w_{31}=0$.
The rates $w_{23}>0$ and $w_{32}>0$ may be arbitrary.
These assumptions result in the following rate matrix:

\begin{align}
    W = \begin{pmatrix}
    -w_{21} & w_{12} & 0 \\ w_{21} & -(w_{12} + w_{32}) & w_{23} \\ 0 & w_{32} & -w_{23}
    \end{pmatrix}.
\end{align}

In this case, let $\mathcal S = \{3\}$, the singleton set of state 3. When $\mathcal Y(t) = 1$ the inference problem is trivial since the system takes state 3 with probability one. Thus, we emphasize the intervals when $\mathcal Y(t) = 0$. 

First, consider the forward message given by the following system of differential equations 
\begin{align} \label{eq:submatrix-chain}
    \fod {\bm \alpha}  t= \begin{pmatrix}
    -w_{21} & w_{12} \\ w_{21} & -(w_{12}+w_{32})
    \end{pmatrix} \bm \alpha~.
\end{align}
Note that the matrix defining the system of equations corresponds to the upper left block of $W$. To simplify notation, let $w_{21} = w_{12} = a$ and $w_{32} = b$. Then, the submatrix reduces to the following form:
\begin{align}
    \fod {\bm \alpha}  t= \begin{pmatrix}
    -a & a \\ a & -a-b
    \end{pmatrix} \bm \alpha~.
\end{align}

This is a linear system of differential equations that can be solved exactly. The sub-matrix is real-symmetric, so is diagonalizable. Therefore, the solution will be of the  form:
\begin{align}
    \bm \alpha(t) = A e^{\lambda_1 t} \bm v_1 + B e^{\lambda_2 t} \bm v_2.
\end{align}
where $\lambda_i$ is an eigenvalue of the rate submatrix and $\bm v_i$ is the corresponding eigenvector. The eigenvalues and vectors are:
\begin{align}
    \lambda_{1/2} &= a(-1 \pm \sqrt{1 + \gamma^2}) - \gamma, 
\end{align}
\begin{align}
    \bm v_{1/2} = \begin{pmatrix} \gamma \pm \sqrt{1 + \gamma^2} \\ 1 \end{pmatrix}, 
\end{align}
where $\gamma = b/2a$. Constants $A$ and $B$ are found through the initial condition given below. 
In the previous section, the initial forward message was set to the equilibrium distribution. However, the chain structure leaves no ambiguity in the state occupied when the observation changes from $1$ to $0$. So, the initial forward message is the delta distribution on state $2$:
\begin{align}
    \bm \alpha(0) = \begin{pmatrix}
    0 \\ 1
    \end{pmatrix}.
\end{align}

After some time $T$, the system re-enters the visible state, namely, state 3 again. By the same reasoning, we also have an unambiguous boundary condition for the backward message:
\begin{align}
    \bm \beta(T) = \begin{pmatrix}
    0 \\ 1
    \end{pmatrix}.
\end{align}
Also, by the symmetry in the rates, the backward message evolves according to the same equations as the forward message, but backward in time. Thus, $\bm \beta(t) = \bm \alpha(T-t)$,
\begin{align}
    \bm \beta(t) = Ae^{\lambda_1(T-t)} \bm v_1 + Be^{\lambda_2(T-t)} \bm v_2
\end{align}
where $\lambda_i$'s, $\bm v_i$'s, constants $A$ and $B$ all remain the same due to symmetry.

The conditional probability is proportional to the component-wise product $\bm \rho = \bm \alpha \odot \bm \beta$, which yields:
\begin{align}
    \bm \rho(t) &= \mathcal A \bigr[e^{\lambda_1(T-t)+\lambda_2t}+e^{\lambda_1 t + \lambda_2(T-t)}\bigr] + \mathcal B,
\end{align}
where
\begin{align}
    \mathcal A &= AB( \bm v_1\odot \bm v_2) = \red{\begin{pmatrix} -AB & AB \end{pmatrix}^\intercal},  \\
    \mathcal B &= A^2e^{\lambda_1 T} (\bm v_1\odot \bm v_1) + B^2e^{\lambda_2 T} (\bm v_2 \odot \bm v_2).
\end{align}
To recover the conditional probability, we must normalize such that the component sum evaluates to $1$. In this case, since the submatrix \red{in Equation \ref{eq:submatrix-chain}} is symmetric, the eigenvectors of the submatrix are orthogonal to each other, which means the sum of the components of $\bm v_1 \odot \bm v_2$, or the inner-product between $\bm v_1$ and $\bm v_2$, evaluates to 0. Then, the component-wise sum of $\mathcal{A}$ is zero, so the normalizing constant equals the component-wise sum of $\mathcal{B}$. 

Write $\bm p(t)$ to represent the conditional probability. 
We write $Z = \sum_k \mathcal B \ind k$ to represent the normalizing constant. 
The final equation describing the conditional probability can thus be rewritten in the following form:
\begin{align}
    \bm p(t) = \frac{1}{Z} \left(\mathcal A e^{(\lambda_1-\lambda_2)t}e^{\lambda_2T} + \mathcal A e^{(\lambda_2-\lambda_1)t}e^{\lambda_1T} + \mathcal B \right).
\end{align}
The components of $\bm p(t)$ may also be expressed as the solutions of a second-order ordinary differential equation, cf.~\eqref{eq:ode1}-\eqref{eq:sode}.

\subsection{Irreversible 3-State Loop}

Light-gated Channelrhodopsin-2 (ChR2) receptors can be modeled with a 3-state chain where each vertex has out-degree 1 and forms a cycle, as depicted in Figure \ref{fig:three-state}, right panel \cite{eckford2018channel, nagel2003channelrhodopsin}.  
Let state $1$ be open, and  states $2$ and $3$ be closed. That is, $\mathcal S = \{1\}$. 
The open/closed status of the channel is observed through voltage recordings: high conductance indicates that the channel is in the open state, and low conductance implies the closed state. The rate matrix can be written as the following:
\begin{align}
    W = \begin{pmatrix}
    -w_{21} & 0 & w_{13} \\ w_{21} & -w_{32} & 0 \\ 0 & w_{32} & -w_{13}
    \end{pmatrix}
\end{align}
where $w_{ji}$ is the transition rate from state $i$ to state $j$. Conditioning on the channel being closed, the message-passing algorithm takes the lower right $2\times 2$ block as the transition matrix.

When transitioning from $\mathcal S$ to a state not in $\mathcal S$, the system must enter state $2$ first and exit through state $3$. 
Thus the boundary conditions for the forward and backward message are
\begin{align}\label{eq:init-ex-b}
    \bm \alpha(0) = \begin{pmatrix}
    1 \\ 0
    \end{pmatrix}, ~~ \bm \beta(T) = \begin{pmatrix}
    0 \\ 1
    \end{pmatrix}
\end{align}
where the first component corresponds to state 2 and second state 3. Suppose $w_{32} \neq w_{13}$ and let $\gamma = \frac{w_{32}}{w_{13}-w_{32}}$.  Then, the solution of the message-passing differential equations with the initial condition enforced satisfies:
\begin{align}
    \bm \alpha(t) &= e^{-w_{32}t}\begin{pmatrix} 1 \\ \gamma \end{pmatrix} -\gamma e^{-w_{31}t}\begin{pmatrix} 0 \\ 1 \end{pmatrix}, \\
    \bm \beta(t) &= e^{-w_{32}(T-t)} \begin{pmatrix} 1 \\ 0 \end{pmatrix} + \frac{1}{\gamma}e^{-w_{13}(T-t)}\begin{pmatrix} -\gamma \\ 1 \end{pmatrix}.
\end{align}
To get the conditional probability, first take the component-wise product between the forward and backward message at time $t$.
\begin{align}
    \bm \rho(t) = e^{-w_{32}t - w_{13}(T-t)}\begin{pmatrix} -1 \\ 1 \end{pmatrix} + \begin{pmatrix} e^{-w_{32}T} \\ -e^{-w_{13}T} \end{pmatrix}
\end{align}
As in the previous example, the normalizing constant is invariant in time. 
In this case, $Z = e^{-w_{32}T} -e^{-w_{13}T}$. 
Thus, the time evolution of the conditional probability can be written  
\begin{align}
    \bm p(t) = \bm \rho(t)/Z.
\end{align}


\blue{Next, we consider a case in which the submatrix is not diagonalizable.}
\blue{We set $w = w_{32} = w_{13}$, so that} the corresponding submatrix:
\begin{align}
    U = \begin{pmatrix}
    -w & 0 \\ w & -w
    \end{pmatrix}
\end{align}
admits the following Jordan normal form:
\begin{align}
    U = \begin{pmatrix}
    0 & w \\ 1 & 0
    \end{pmatrix}
    \begin{pmatrix}
    -w & 1 \\ 0 & -w
    \end{pmatrix}
    \begin{pmatrix}
    0 & 1 \\ \frac{1}{w} & 0
    \end{pmatrix}.
\end{align}
Using the same initial condition as Equation \eqref{eq:init-ex-b} to solve the system of differential equations yields the following forward message.
\begin{align}
    \bm \alpha(t) = \begin{pmatrix}
    e^{-wt} \\ te^{-wt}
    \end{pmatrix}
\end{align}
We solve for the backward message by undergoing a similar procedure and yield
\begin{align}
    \bm \beta(t) = \begin{pmatrix}
    (T-t)e^{-w(T-t)} \\ e^{-w(T-t)}
    \end{pmatrix}.
\end{align}
Upon taking the component-wise product, we \blue{observe} that the evolution of conditional probability is independent of time:
\begin{align}
    \bm \rho(t) = e^{-wT} \begin{pmatrix}
    T - t \\ t
    \end{pmatrix}, ~~ Z = \frac{1}{Te^{-wT}}.
\end{align}
Since both the time entering and exiting the hidden states are fixed, we can see that the probability flows linearly from one state to the other at equal rates.

\blue{Time-invariant normalization is a fundamental property of the sum-product algorithm \cite{forney2011partition}.
In Appendix \ref{sec:thmtwoproof} we give an elementary demonstration of this property for continuous-time systems with time-homogeneous transition rates, as illustrated by the two cases presented above.}

\subsection{Generalization}

\blue{We show that under certain circumstances, the conditional probability follows a second-order nonhomogeneous linear ordinary differential equation.}



\begin{corollary}\label{cor:ODE}
For 3-state systems where the truncated submatrix $U$ has distinct eigenvalues, the conditional probability can be written as the solution of a nonhomogeneous second-order linear differential equation with constant coefficients.
\end{corollary}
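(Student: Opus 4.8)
The plan is to show directly that $\bm\rho(t) = \bm\alpha(t)\odot\bm\beta(t)$ — and hence each component of $\bm p(t) = \bm\rho(t)/Z$, since $Z$ is a time-invariant constant — lies in the solution space of a second-order constant-coefficient linear ODE, and then verify that this ODE is nonhomogeneous. First I would diagonalize the truncated $2\times 2$ submatrix $U$: since it has distinct eigenvalues $\lambda_1\neq\lambda_2$, write $\bm\alpha(t) = A_1 e^{\lambda_1 t}\bm v_1 + A_2 e^{\lambda_2 t}\bm v_2$. The backward message $\bm\beta(t)$ satisfies the ODE \eqref{eq:backwarddiff} whose matrix is $-U^\intercal$ (the transpose with a sign flip, as seen in the examples where $\bm\beta$ runs ``backward in time''), so its eigenvalues are $-\lambda_1, -\lambda_2$ and $\bm\beta(t) = B_1 e^{-\lambda_1 t}\bm w_1 + B_2 e^{-\lambda_2 t}\bm w_2$ for appropriate eigenvectors $\bm w_k$. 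Taking the component-wise product, every component of $\bm\rho(t)$ is a linear combination of the four exponentials $e^{(\lambda_1-\lambda_1)t}=1$, $e^{(\lambda_2-\lambda_2)t}=1$, $e^{(\lambda_1-\lambda_2)t}$, and $e^{(\lambda_2-\lambda_1)t}$ — that is, of $1$, $e^{\mu t}$, and $e^{-\mu t}$ where $\mu := \lambda_1-\lambda_2 \neq 0$.

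Next I would observe that the span of $\{1, e^{\mu t}, e^{-\mu t}\}$ is exactly the solution space of the third-order operator $\frac{d}{dt}\bigl(\frac{d^2}{dt^2} - \mu^2\bigr)$, but more to the point, any function of the form $c_0 + c_1 e^{\mu t} + c_2 e^{-\mu t}$ satisfies the second-order nonhomogeneous ODE
\begin{align}
    \ddot f(t) - \mu^2 f(t) = -\mu^2 c_0,
\end{align}
since $\ddot{(c_1 e^{\mu t} + c_2 e^{-\mu t})} = \mu^2 (c_1 e^{\mu t} + c_2 e^{-\mu t})$ and $\ddot c_0 = 0$. So for each component $i$, writing $\rho\ind i(t) = c_0\ind i + c_1\ind i e^{\mu t} + c_2\ind i e^{-\mu t}$, we get $\ddot\rho\ind i - \mu^2\rho\ind i = -\mu^2 c_0\ind i$, a second-order linear ODE with constant coefficients and constant (generically nonzero) forcing term. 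Dividing through by the time-invariant $Z$ gives the same ODE for $p\ind i(t)$ with forcing $-\mu^2 c_0\ind i/Z$. I would then identify $c_0\ind i$ explicitly from the $\mathcal B$ term in the symmetric-chain computation (or its analogue) to confirm it does not vanish identically, so the equation is genuinely nonhomogeneous; the irreducibility/connectivity assumptions guarantee the relevant coefficients are positive, as in the worked examples.

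The main obstacle I anticipate is bookkeeping around the nonhomogeneity claim: one must make sure that the constant term $c_0\ind i$ (equivalently, the coefficient of the $e^{(\lambda_1 - \lambda_1)t}$ and $e^{(\lambda_2-\lambda_2)t}$ cross-terms in $\bm\alpha\odot\bm\beta$) is not zero for at least one relevant component — otherwise the ODE would degenerate to the homogeneous $\ddot f = \mu^2 f$, and the word ``nonhomogeneous'' in the statement would be unjustified. In the symmetric-chain example this amounts to $\mathcal B \neq \bm 0$, which holds because $\mathcal B = A^2 e^{\lambda_1 T}(\bm v_1\odot\bm v_1) + B^2 e^{\lambda_2 T}(\bm v_2\odot\bm v_2)$ is a sum of two nonnegative, not-both-zero vectors (the constants $A,B$ are fixed by nondegenerate boundary data, and $\bm v_k \odot \bm v_k$ has all nonnegative entries). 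I would phrase the argument so that it covers both the diagonalizable-real-symmetric case and the general distinct-eigenvalue case uniformly, noting that distinctness of eigenvalues of $U$ is precisely what makes $\mu \neq 0$ and hence what makes the characteristic polynomial $x^2 - \mu^2$ of the ODE well-defined with the stated structure; the degenerate (repeated-eigenvalue) case is explicitly excluded by hypothesis, consistent with the non-diagonalizable example above where $\bm\rho(t)$ turned out to be merely linear in $t$ (which satisfies the degenerate equation $\ddot f = 0$ instead).
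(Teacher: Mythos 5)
Your proposal is correct and follows essentially the same route as the paper's proof: diagonalize the truncated submatrix $U$, expand $\bm\alpha$ and $\bm\beta$ in the right and left eigenbases (the backward equation being governed by $-U^\intercal$), and observe that the component-wise product is a constant plus multiples of $e^{\pm(\lambda_1-\lambda_2)t}$, which satisfies $\ddot f - (\lambda_1-\lambda_2)^2 f = \text{const}$. Your additional care in checking that the constant forcing term is actually nonzero (so the equation is genuinely nonhomogeneous) goes slightly beyond what the paper verifies, but does not change the argument.
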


\begin{proof}
Let $(\lambda_i, \bm v_i)$ be eigenpairs for $U$ and let $(\lambda_i, \bm w_i)$ be eigenpairs for $U^\intercal$. 
Then the forward and backward messages are given by the expressions:
\begin{align}
    \bm \alpha(t) &= Ae^{\lambda_1t} \bm v_1 + Be^{\lambda_2t} \bm v_2,  \\
    \bm \beta(t) &= Ce^{\lambda_1(T-t)} \bm w_1 + De^{\lambda_2(T-t)} \bm w_2.
\end{align}
Taking the component-wise product yields the form:
\begin{align}
    \bm \rho(t) &=  \mathcal A \bigr[ e^{\lambda_1 t + \lambda_2(T-t)} \bigr] + \mathcal B \bigr[e^{\lambda_1(T-t)+\lambda_2t} \bigr] + \mathcal C, \\
    &= \mathcal A \bigr[ e^{(\lambda_1 - \lambda_2)t}e^{\lambda_2T} \bigr] + \mathcal B \bigr[e^{(\lambda_2 - \lambda_1) t}e^{ \lambda_1T} \bigr] + \mathcal C,
\end{align}
where
\begin{align}
    \mathcal A &= AD( \bm v_1\odot \bm w_2), \quad \mathcal B = BC( \bm v_2\odot \bm w_1), \\
    \mathcal C &= ACe^{\lambda_1 T} (\bm v_1\odot \bm w_1) + BDe^{\lambda_2 T} (\bm v_2 \odot \bm w_2).
\end{align}
Since the left and right eigenvectors corresponding to different eigenvalues are orthogonal, all time-dependent terms cancel when normalizing. 
\blue{Thus the normalization constant} $Z = \sum_k \mathcal C\ind k$.
Upon differentiating $\bm p(t) = \bm \rho(t)/Z$ twice with respect to time, we  arrive at the second-order linear differential equation that describes the time evolution of the condition probability:
\begin{align}
\label{eq:ode1}
    \fodd {\bm p} t &= \frac{(\lambda_1-\lambda_2)^2}{Z} \bigr[ \mathcal A e^{(\lambda_1-\lambda_2)t}e^{\lambda_2T} + \mathcal B e^{(\lambda_2-\lambda_1)t}e^{\lambda_1T}\bigr] \\
    &= \frac{(\lambda_1-\lambda_2)^2}{Z} (\bm p - \mathcal C). \label{eq:sode}
\end{align}
This second order equation comes equipped with two boundary conditions set by fixing $\bm{\alpha}(0)$ and $\bm{\beta}(T)$.
\end{proof}

\begin{corollary}
For a system of arbitrary dimension, for which the truncated submatrix $U$ has exactly two distinct eigenvalues and is diagonalizable, the conditional probability can be written as a second-order linear differential equation of the same form as Equation \ref{eq:sode}.
\end{corollary}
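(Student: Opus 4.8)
The plan is to rerun the proof of Corollary \ref{cor:ODE} almost verbatim, with the two (now possibly higher-dimensional) eigenspaces of $U$ playing the role of the two eigenvectors there. Write $d=\dim U$ for the number of observable states and let $\lambda_1\neq\lambda_2$ be the two eigenvalues of $U$. Diagonalizability gives $\R^d = E_1\oplus E_2$ with $E_i=\ker(U-\lambda_i I)$, and likewise $\R^d = F_1 \oplus F_2$ with $F_i=\ker(U^\intercal-\lambda_i I)$ for the transpose (which has the same spectrum). The first step is to decompose the boundary data: $\bm\alpha(0)=\bm a_1+\bm a_2$ with $\bm a_i\in E_i$, and $\bm\beta(T)=\bm b_1+\bm b_2$ with $\bm b_i\in F_i$. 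Since the restrictions of \eqref{eq:forwarddiff}--\eqref{eq:backwarddiff} to $\mathcal S$ read $\fod{\bm\alpha}{t}=U\bm\alpha$ and $\fod{\bm\beta}{t}=-U^\intercal\bm\beta$ (so that $\bm\beta(t)=e^{U^\intercal(T-t)}\bm\beta(T)$, exactly as in the proof of Theorem \ref{thm:main}), the flows act as scalars on each summand:
\begin{align}
  \bm\alpha(t) &= e^{\lambda_1 t}\,\bm a_1 + e^{\lambda_2 t}\,\bm a_2, \\
  \bm\beta(t) &= e^{\lambda_1(T-t)}\,\bm b_1 + e^{\lambda_2(T-t)}\,\bm b_2 .
\end{align}
This split has no secular $t^k$ prefactors precisely because $U$ is diagonalizable, which is the only place that hypothesis enters.

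Next, form $\bm\rho(t)=\bm\alpha(t)\odot\bm\beta(t)$. It expands into four terms; two of them (the ``diagonal'' terms) are constant in $t$ and collect into
\begin{align}
  \mathcal C = e^{\lambda_1 T}(\bm a_1\odot\bm b_1) + e^{\lambda_2 T}(\bm a_2\odot\bm b_2),
\end{align}
while the remaining two (the ``cross'' terms) are $\mathcal A\,e^{(\lambda_1-\lambda_2)t}e^{\lambda_2 T}$ and $\mathcal B\,e^{(\lambda_2-\lambda_1)t}e^{\lambda_1 T}$ with $\mathcal A=\bm a_1\odot\bm b_2$ and $\mathcal B=\bm a_2\odot\bm b_1$. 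The crucial observation --- the same orthogonality used in Corollary \ref{cor:ODE} --- is that $E_i\perp F_j$ whenever $i\neq j$: if $U\bm a_1=\lambda_1\bm a_1$ and $U^\intercal\bm b_2=\lambda_2\bm b_2$, then $(\lambda_1-\lambda_2)\,\bm a_1\cdot\bm b_2=(U\bm a_1)\cdot\bm b_2-\bm a_1\cdot(U^\intercal\bm b_2)=0$, hence $\sum_k\mathcal A\ind k=\bm a_1\cdot\bm b_2=0$ and, similarly, $\sum_k\mathcal B\ind k=0$. Thus normalizing kills the time-dependent terms from the denominator and $Z=\sum_k\rho\ind k(t)=\sum_k\mathcal C\ind k$ is time-invariant, in agreement with Appendix \ref{sec:thmtwoproof}.

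Finally, with $\bm p(t)=\bm\rho(t)/Z$ and only $e^{\pm(\lambda_1-\lambda_2)t}$ appearing in the $t$-dependent part, differentiating twice gives
\begin{align}
  \fodd{\bm p}{t} = \frac{(\lambda_1-\lambda_2)^2}{Z}\bigl[\mathcal A\,e^{(\lambda_1-\lambda_2)t}e^{\lambda_2 T} + \mathcal B\,e^{(\lambda_2-\lambda_1)t}e^{\lambda_1 T}\bigr] = \frac{(\lambda_1-\lambda_2)^2}{Z}\bigl(\bm\rho(t)-\mathcal C\bigr),
\end{align}
which is a second-order linear constant-coefficient equation of exactly the form \eqref{eq:sode}, equipped with the two boundary conditions set by $\bm\alpha(0)$ and $\bm\beta(T)$. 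I expect the main obstacle to be purely organizational: invoking the eigenspace decomposition correctly when an eigenspace has dimension greater than one (so that ``projection onto $E_i$'' replaces ``coefficient of $\bm v_i$''), and confirming that $\bm\alpha$ and $\bm\beta$ stay supported on the observable coordinates so that $U$, not the full $W$, is the generator --- both facts are already established in the proof of Theorem \ref{thm:main}. No genericity, symmetry, or reversibility of $W$ is required.
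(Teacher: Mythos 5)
Your proof is correct and follows essentially the same route as the paper's, which likewise reduces the statement to the argument of Corollary \ref{cor:ODE} by grouping eigenvectors according to their eigenvalue (the paper phrases this loosely as taking ``the sum of all right (left) eigenvectors'' for each $\lambda_i$, where you more precisely decompose the boundary data $\bm\alpha(0)$ and $\bm\beta(T)$ into their eigenspace components). Your explicit check of the left/right eigenspace orthogonality $E_i\perp F_j$ for $i\neq j$ is exactly the cancellation the paper invokes, just spelled out in more detail.
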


\begin{proof}
Let $\lambda_1$ and $\lambda_2$ be the two distinct eigenvalues. Let $\bm v_i$ be the sum of all right eigenvectors corresponding to eigenvalue $\lambda_i$, and let $\bm w_i$ be the sum of all left eigenvectors corresponding to eigenvalue value $\lambda_i$. Then by diagonalizability, the proof for this corollary follows the proof of Corollary \ref{cor:ODE}.
\end{proof}

This result extends Corollary \ref{cor:ODE} to higher dimensional systems, under special conditions.
\red{These two corollaries match intuition: we are concerned with the state occupancy probability conditioned on both the entrance and exit times of an observable state, so the probability flow obeys a second-order differential equation, which also requires specifying two boundary conditions.}
We note, however, that  if the submatrix has \emph{three or more} distinct eigenvalues, we do not expect a similar result to hold. 
\red{Moreover, generic matrices typically have as many distinct eigenvalues as their dimension, so we do not expect this scenario to occur outside of special cases when the rate submatrix is highly regular. In addition, it would be difficult to check if a given rate submatrix satisfies these conditions through numerical solvers, as equality of repeated eigenvalues might be obscured by floating point arithmetic. Therefore, as much as the second-order differential equations interpretation is intuitively appealing, it is likely not a generic phenomenon that we should expect for arbitrary rate matrices.}

\section{Biological Example: CFTR} \label{sec:apps}

\begin{figure}
    \centering
    \includegraphics[width=.7\linewidth]{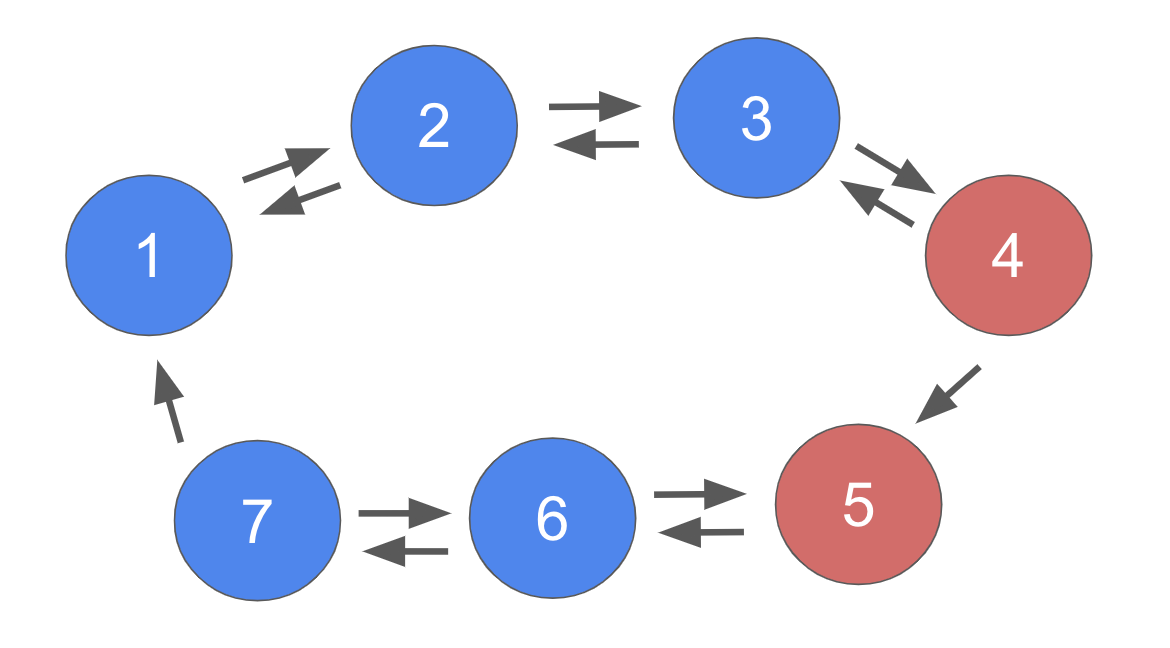}
    \caption{State diagram for CFTR. When the protein enters one of the red states, the ion channels will open and conduct a current. On the other hand, when CFTR is in one of the blue states, then the ion channels are closed and conducts no current. For this system $\mathcal{S}=\{4,5\}$.}
    \label{fig:cftr}
\end{figure}

\begin{figure*}
    \centering
    \includegraphics[width=\linewidth]{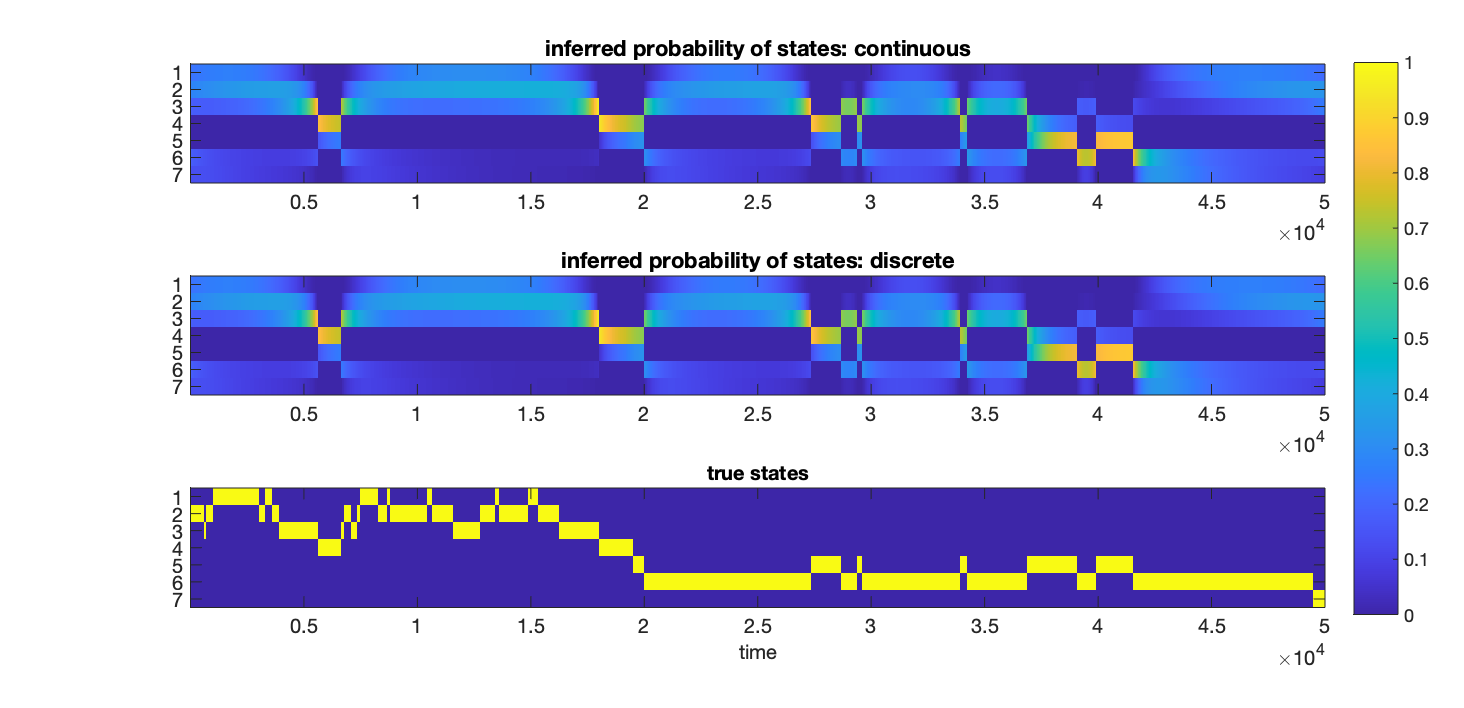}
    \caption{The inference result of the  continuous-time algorithm \textbf{(top)} and  the discrete-time algorithm  \textbf{(middle)} for  the true state occupancy simulated from the Gillespie algorithm \textbf{(bottom)}. Sampling time step  $\dt = 10^{-4}$.}
    \label{fig:inference}
\end{figure*}

Here, we show that our algorithm applies to a higher dimensional system, namely the 7-state model of the cystic fibrosis transmembrane conductance regulator (CFTR) protein. 
CF is a common life-threatening genetic disorder. 
CFTR is an important protein that regulates the opening and closing of ion channels. 
Loss of CFTR function causes pancreatic insufficiency as well as airway infection due to excessive mucus, 
which in turn can cause a variety of complications such as impaired innate immunity and respiratory failure \cite{goetz2019review}. 

Mathematically, the behavior of CFTR can be captured with a 7-state hidden Markov process.
Figure \ref{fig:cftr} illustrates the state diagram \cite{fuller2005block}. 
The ion channel opens (conducts an ionic current) when CFTR is in state four or five (marked in red).
When in the nonconducting states (marked in blue) the ion channel is closed.
Transmembrane conductance recordings report whether CFTR is in a conducting or a nonconducting state, but do not directly indicate which of the possible states is occupied. 
Using the numbering shown in Figure \ref{fig:cftr}, the rate matrix is defined as follows:

\begin{footnotesize}
\begin{align}
    W = \begin{pmatrix*}[r] -9.9 & 5.0 & 0 & 0 & 0 & 0 & 1.7 \\
    9.9 & -12.7 & 5.8 & 0 & 0 & 0 & 0 \\
    0 & 7.7 & -10.7 & 10.0 & 0 & 0 & 0 \\
    0 & 0 & 4.9 & -17.1 & 0 & 0 & 0 \\
    0 & 0 & 0 & 7.1 & -3.0 & 7.0 & 0 \\ 
    0 & 0 & 0 & 0 & 3.0 & -13.0 & 12.8 \\
    0 & 0 & 0 & 0 & 0 & 6.0 & -14.5 \end{pmatrix*}.
\end{align}
\end{footnotesize}

\normalsize

We used Gillespie's exact stochastic simulation algorithm  \cite{gillespie1977exact,gillespie2007stochastic,wilkinson2018stochastic} to generate sample traces of the ion channel states and recordings. 
The simulation produces discrete state, continuous time trajectories.
We introduced a finite sampling time step to discretize the simulated data along the time axis, consistent with data obtained through experimental recordings.  
Then, we place the generated data into discretized time bins where the size of each bin (one can think of this as the sampling time step) is a parameter that can be altered. 
We solved the differential equations giving the forward and backward messages exactly via function handles in Matlab.
Figure \ref{fig:inference} compares the traces produced by the classical discrete-time algorithm, our continuous-time algorithm, and the true states, and shows excellent agreement among the respective curves.

With a sufficiently small time step ($\dt=10^{-4}$), the continuous and discrete-time algorithms show no visible discrepancy, as expected. 
Figure \ref{fig:convergence} shows that the maximum discrepancy ($\ell_\infty$ norm) between the conditional probabilities generated by the continuous time and discrete time algorithms decreases linearly with the sampling time step $\dt$.
The figure shows the results from an ensemble of \red{forty} independent repeated trials for each sample size.
\red{Again, we emphasize} that while the discrete time algorithm requires iterating over all samples, which scales $\mathcal O(\dt\inv)$, the continuous time algorithm outputs a \textit{description} of the conditional probability --- that is, a function that returns a value at an exact time point queried --- in $\mathcal O(1)$ time, scaling only with the number of transitions. 

\begin{figure}
    \centering
    \includegraphics[width=\linewidth]{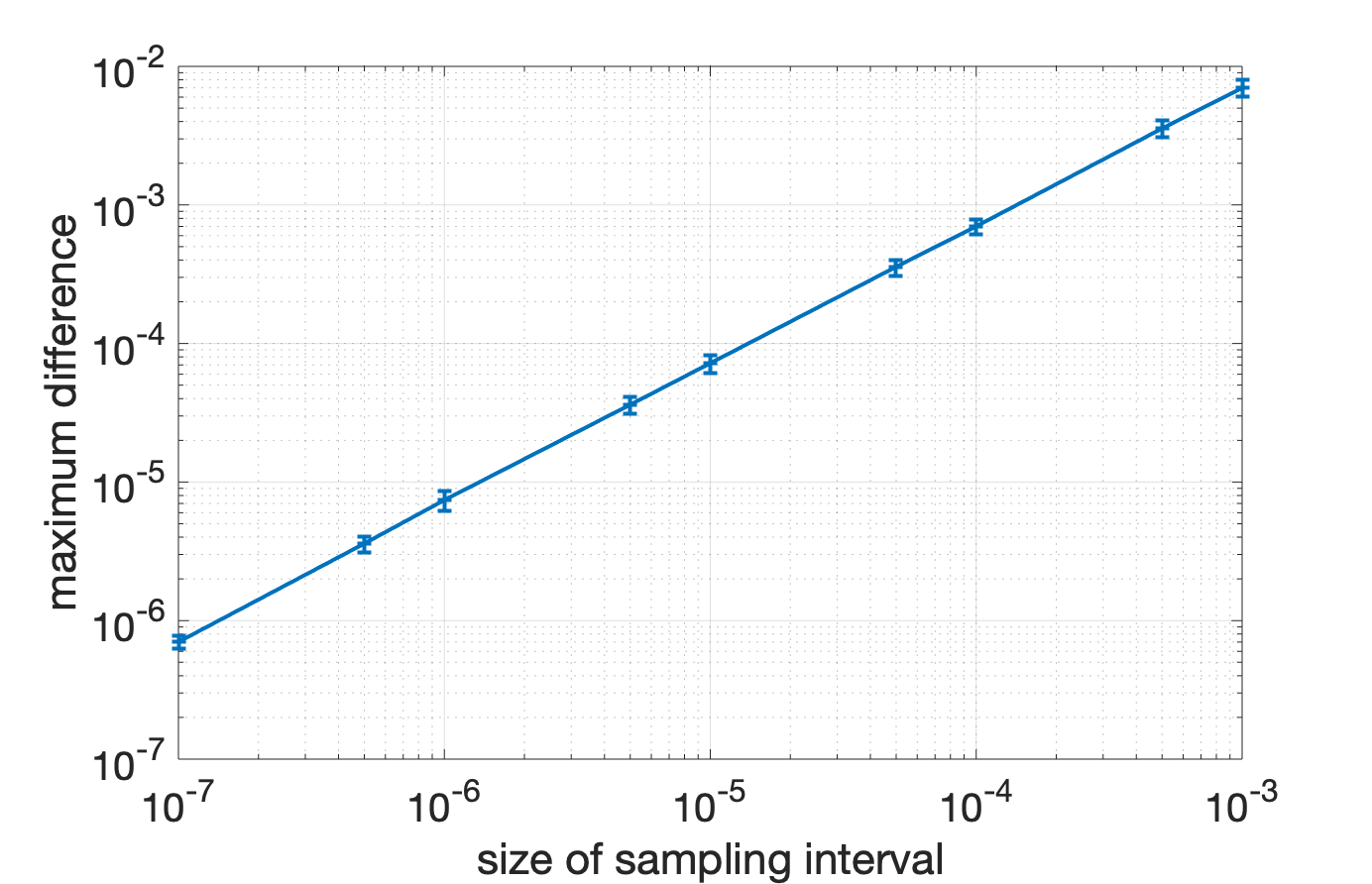}
    \caption{First-order convergence of the message passing algorithms as a function of sampling time step. 
    The mean maximum difference was plotted along with one standard deviation about the mean as the error bar for an ensemble of \red{forty} trajectories.}
    \label{fig:convergence}
\end{figure}

\section{Discussion \& Conclusions}\label{sec:conclusion}
This paper presents an algorithm for continuous-time inference on partially observable Markov processes with discrete state spaces. 
We show that the well-known sum-product algorithm can be extended to the continuous-time domain via two sets of differential equations and pointwise normalization. 
\blue{In the continuous time setting, we were able to solve the trajectory of conditional probabilities exactly given a finite time-series.
Moveover, we find that the dynamics of the state occupancy probabilities can be reduced to second-order differential equations under special circumstances.}
These results are valuable not only for their mathematical interest,
but also because they have the potential to reduce the inference problem to solving systems of linear differential equations, 
with a potentially significant reduction in computational complexity for long time series.
%
Numerically, the continuous-time algorithm is consistent with the discrete-time algorithm in the limit of small time step, but executes in approximately constant time rather than linearly in the number of time steps.  

Our formalism extends naturally to non-binary observations. 
Briefly, let $\mathcal S_1, \mathcal S_2, \dots, \mathcal S_m$ be a partition over the sample space $\Omega$, and suppose the observable process $\mathcal Y$ is given as:
\begin{align}
    \mathcal Y(t) = i ~~\IF S(t) \in \mathcal S_i.
\end{align}
For a sojourn with observation $i$, we can apply the forward and backward message-passing scheme as introduced for the binary case, viewing the observation as either in $\mathcal S_i$ or $\Omega \setminus \mathcal S_i$. 
At the boundaries, the same technique (as in the proof of Theorem \ref{thm:main}) can be used for finding the update rule by noting the possible transitions from $\mathcal S_i$ to $\mathcal S_j$, for all $j \neq i$. 
This approach can finally be extended to an arbitrary collection of subsets of $\Omega$ where the elements are not necessarily disjoint. 
One may accomplish this extension by expanding all unions and intersections as disjoint sets possibly with the same observable. 
We defer detailed investigations in this direction to future work.

\blue{Extending the continuous-time message passing to the case with inhomogeneous transition rates remains an open problem. 
We expect a derivation similar to the proof of the main theorem would be applicable, possibly with smoothness constraints on the transition rates. 
While the normalization constant will remain invariant in time, we do not expect a result such as Corollary \ref{cor:ODE} to hold beyond constant transition rates.}

Finally, we note the relationship between conditional probability and a second-order differential equation is intuitively satisfying: the entry and exit times act as two boundary conditions that fix the endpoints of the evolution, whereas the unconstrained forward evolution equation, a first-order differential equation, requires only the starting condition. 
It is an interesting question for future work to investigate under which assumptions  a similar result as Corollary \ref{cor:ODE} would hold for systems with more than two distinct eigenvalues.

\appendices
\section{Notation and Preliminaries}\label{sec:prelim}

For completeness, we define a continuous-time discrete-space Markov process with exponential waiting times.
\begin{definition}[Markov Process] Let $\{S(t); t \in [0,T]\}$ be a discrete-space, continuous-time stochastic process where for each $t$, $S(t)$ is a random variable with state space $\Omega = \{1,2,\dots, N\}$, $N < \infty$. Then, the process $S(t)$ has the Markov Property if for any $s_1, s_2, \dots, s_m \in \Omega$, $0 < \tau_1 < \tau_2 < \dots < \tau_{m-1} < t$,
\begin{align}
        \nonumber
        \Pr[S(t) = s_m | S(\tau_1) = s_1, S(\tau_2) = s_2, \dots, S(\tau_{m-1}) = s_{m-1}] \\
        = \Pr[S(t) = s_m | S(\tau_{m-1}) = s_{m-1}].
\end{align}\normalsize
Furthermore, the process has exponential waiting times if for any $i,j \in \Omega$, $i \neq j$, there is a constant rate $0 \leq w_{ji} < \infty$ such that 
\begin{align*}\footnotesize
    \begin{cases}
        &\Pr[S(t + \dt) = i | S(t) = i] = 1 - \sum_{k \neq i} w_{ki}\dt + o(\dt) \\
        &\Pr[S(t + \dt) = j | S(t) = i] = w_{ji}\dt + o(\dt) \\
        &\Pr[S(t+\dt)=j , S(t + 2\dt) = k | S(t) = i] = o(\dt), ~~ \forall k
    \end{cases}
\end{align*}\normalsize
for sufficiently small $\dt$.
\end{definition}

The following table lists notation used in the paper. 
\begin{figure}[h]
    \centering
    \begin{tabular}{c|c}
        Symbol & Meaning \\
        \hline
        $S_t$ & discrete-time process with (integer) time index $t$ \\
        $S(t)$ & continuous-time process with time index $t$ \\
        $\Omega$ & sample space of a process at fixed time \\
        $\mathcal S$ & states that give observable ``1'' \\
        $\mathcal Y(t)$ & the observed process (indicating $S(t) \in \mathcal S$) \\
        $P$ & transition matrix of a Markov chain \\
        $W$ & transition rate matrix of Markov process \\
        $\bm \alpha_t; \bm \alpha(t)$ & forward message \\
        $\bm \beta_t; \bm \beta(t)$ & backward message \\
        $\bm \chi_t; \bm \chi(t)$ & \red{observation message} \\
        $\bm \rho_t; \bm \rho(t)$ & unnormalized conditional state-occupancy probability  \\
        $Z(t)$ & normalizing constant for $\bm \rho(t)$ \\
        $\bm p_t; \bm p(t)$ & conditional state-occupancy probability\\
        $\dt$ & sampling time step
    \end{tabular}
\end{figure}

\section{Invariant normalization in continuous time} \label{sec:thmtwoproof}

\blue{Time-invariant invariant normalization is a fundamental property of the sum-product algorithm \cite{forney2011partition}.
In the continuous time case, we observed that this property can be easily confirmed when the submatrices are diagonalizable, because the left and right eigenvectors corresponding to different eigenvalues are orthogonal. 
When the submatrix has nontrivial Jordan blocks, the time-dependent term cancels in less obvious ways.
Here we show that time-independent normalization holds in general, using only elementary calculus without  invoking the machinery of factor-graphs.}


Recall that the normalizing constant is $Z(t) = \sum_i \alpha \ind i (t) \beta \ind i (t)$. Using Equation \ref{eq:forwarddiff} and \ref{eq:backwarddiff}, we can obtain the following series of expressions. 
\begin{align}
    \fod {Z(t)} t &= \od t \sum_{i \in \mathcal S} \alpha \ind i (t) \beta \ind i (t) \\
    &= \sum_{i \in \mathcal S} \fod {\alpha \ind i (t)} t \beta \ind i (t) + \alpha \ind i (t) \fod {\beta \ind i (t)} t \\
    &=  \sum_{i \in \mathcal S} \left( \sum_{k \in \mathcal S \setminus \{i\}} w_{ki} \alpha \ind k  - \sum_{l \neq i} \alpha \ind i \right) \beta \ind i \ldots \nonumber \\
    &\quad \quad + \alpha \ind i \left( - \sum_{k \in \mathcal S \setminus \{i\}} w_{ik} \beta \ind k  + \sum_{l \neq i} \beta \ind i \right) \\
    &= \sum_{i \in \mathcal S} \sum_{k \in \mathcal S \setminus \{i\}} w_{ki} \alpha \ind k \beta \ind i - w_{ik} \alpha \ind i \beta \ind k \\
    &= 0
\end{align}
Since the rate of change of $Z(t)$ is zero, the normalizing constant is invariant of time.


\bibliographystyle{IEEEtran}
\bibliography{IEEEabrv,ref}

\begin{thebibliography}{10}
\providecommand{\url}[1]{#1}
\csname url@samestyle\endcsname
\providecommand{\newblock}{\relax}
\providecommand{\bibinfo}[2]{#2}
\providecommand{\BIBentrySTDinterwordspacing}{\spaceskip=0pt\relax}
\providecommand{\BIBentryALTinterwordstretchfactor}{4}
\providecommand{\BIBentryALTinterwordspacing}{\spaceskip=\fontdimen2\font plus
\BIBentryALTinterwordstretchfactor\fontdimen3\font minus
  \fontdimen4\font\relax}
\providecommand{\BIBforeignlanguage}[2]{{%
\expandafter\ifx\csname l@#1\endcsname\relax
\typeout{** WARNING: IEEEtran.bst: No hyphenation pattern has been}%
\typeout{** loaded for the language `#1'. Using the pattern for}%
\typeout{** the default language instead.}%
\else
\language=\csname l@#1\endcsname
\fi
#2}}
\providecommand{\BIBdecl}{\relax}
\BIBdecl

\bibitem{grisettiyz2005improving}
G.~Grisettiyz, C.~Stachniss, and W.~Burgard, ``Improving grid-based slam with
  {Rao-Blackwellized} particle filters by adaptive proposals and selective
  resampling,'' in \emph{Proceedings of the 2005 IEEE International Conference
  on Robotics and Automation}.\hskip 1em plus 0.5em minus 0.4em\relax IEEE,
  2005, pp. 2432--2437.

\bibitem{baum1967inequality}
L.~E. Baum and J.~A. Eagon, ``An inequality with applications to statistical
  estimation for probabilistic functions of {Markov} processes and to a model
  for ecology,'' \emph{Bulletin of the American Mathematical Society}, vol.~73,
  no.~3, pp. 360--363, 1967.

\bibitem{anderson2015stochastic}
D.~F. Anderson, B.~Ermentrout, and P.~J. Thomas, ``Stochastic representations
  of ion channel kinetics and exact stochastic simulation of neuronal
  dynamics,'' \emph{Journal of Computational Neuroscience}, vol.~38, no.~1, pp.
  67--82, 2015.

\bibitem{blei2003latent}
D.~M. Blei, A.~Y. Ng, and M.~I. Jordan, ``Latent {Dirichlet} allocation,''
  \emph{Journal of Machine Learning research}, vol.~3, pp. 993--1022, 2003.

\bibitem{hodgkin1952quantitative}
A.~L. Hodgkin and A.~F. Huxley, ``A quantitative description of membrane
  current and its application to conduction and excitation in nerve,''
  \emph{The Journal of Physiology}, vol. 117, no.~4, p. 500, 1952.

\bibitem{skaugen1979firing}
E.~Skaugen and L.~Wall{\o}e, ``Firing behaviour in a stochastic nerve membrane
  model based upon the {Hodgkin—Huxley} equations,'' \emph{Acta Physiologica
  Scandinavica}, vol. 107, no.~4, pp. 343--363, 1979.

\bibitem{hilleionic}
B.~Hille, ``Ionic channels in excitable membranes. current problems and
  biophysical approaches,'' \emph{Biophysical Journal}, vol.~22, no.~2, pp.
  283--294, 1978.

\bibitem{colquhoun1977relaxation}
D.~Colquhoun and A.~G. Hawkes, ``Relaxation and fluctuations of membrane
  currents that flow through drug-operated channels,'' \emph{Proceedings of the
  Royal Society of London. Series B. Biological Sciences}, vol. 199, no. 1135,
  pp. 231--262, 1977.

\bibitem{colquhoun1981stochastic}
D.~Colquhoun and A.~Hawkes, ``On the stochastic properties of single ion
  channels,'' \emph{Proceedings of the Royal Society of London. Series B.
  Biological Sciences}, vol. 211, no. 1183, pp. 205--235, 1981.

\bibitem{colquhoun1982stochastic}
D.~Colquhoun and A.~G. Hawkes, ``On the stochastic properties of bursts of
  single ion channel openings and of clusters of bursts,'' \emph{Philosophical
  Transactions of the Royal Society of London. Series B, Biological Sciences},
  pp. 1--59, 1982.

\bibitem{qin1997maximum}
F.~Qin, A.~Auerbach, and F.~Sachs, ``Maximum likelihood estimation of
  aggregated {Markov} processes,'' \emph{Proceedings of the Royal Society of
  London. Series B: Biological Sciences}, vol. 264, no. 1380, pp. 375--383,
  1997.

\bibitem{fuller2005block}
M.~D. Fuller, Z.-R. Zhang, G.~Cui, and N.~A. McCarty, ``The block of {CFTR} by
  scorpion venom is state-dependent,'' \emph{Biophysical Journal}, vol.~89,
  no.~6, pp. 3960--3975, 2005.

\bibitem{loeliger2007factor}
H.-A. Loeliger, J.~Dauwels, J.~Hu, S.~Korl, L.~Ping, and F.~R. Kschischang,
  ``The factor graph approach to model-based signal processing,''
  \emph{Proceedings of the IEEE}, vol.~95, no.~6, pp. 1295--1322, 2007.

\bibitem{koller_friedman_2012}
D.~Koller and N.~Friedman, \emph{Particle-based Approximate Inference}.\hskip
  1em plus 0.5em minus 0.4em\relax MIT Press, 2012.

\bibitem{fang2022stochastic}
Z.~Fang, A.~Gupta, and M.~Khammash, ``Stochastic filtering for multiscale
  stochastic reaction networks based on hybrid approximations,'' \emph{Journal
  of Computational Physics}, vol. 467, p. 111441, 2022.

\bibitem{cappe2003reversible}
O.~Capp{\'e}, C.~P. Robert, and T.~Ryd{\'e}n, ``Reversible jump,
  birth-and-death and more general continuous time {Markov} chain {Monte Carlo}
  samplers,'' \emph{Journal of the Royal Statistical Society: Series B
  (Statistical Methodology)}, vol.~65, no.~3, pp. 679--700, 2003.

\bibitem{djuric1999estimation}
P.~M. Djuric and J.-H. Chun, ``Estimation of nonstationary hidden {Markov}
  models by {MCMC} sampling,'' in \emph{1999 IEEE International Conference on
  Acoustics, Speech, and Signal Processing. Proceedings. ICASSP99 (Cat. No.
  99CH36258)}, vol.~3.\hskip 1em plus 0.5em minus 0.4em\relax IEEE, 1999, pp.
  1737--1740.

\bibitem{rosales2001bayesian}
R.~Rosales, J.~A. Stark, W.~J. Fitzgerald, and S.~B. Hladky, ``Bayesian
  restoration of ion channel records using hidden {Markov} models,''
  \emph{Biophysical Journal}, vol.~80, no.~3, pp. 1088--1103, 2001.

\bibitem{kschischang2001factor}
F.~R. Kschischang, B.~J. Frey, and H.-A. Loeliger, ``Factor graphs and the
  sum-product algorithm,'' \emph{IEEE Transactions on Information Theory},
  vol.~47, no.~2, pp. 498--519, 2001.

\bibitem{jin2020bayesian}
D.~Jin, F.~Yin, C.~Fritsche, F.~Gustafsson, and A.~M. Zoubir, ``Bayesian
  cooperative localization using received signal strength with unknown path
  loss exponent: {Message} passing approaches,'' \emph{IEEE Transactions on
  Signal Processing}, vol.~68, pp. 1120--1135, 2020.

\bibitem{som2012compressive}
S.~Som and P.~Schniter, ``Compressive imaging using approximate message passing
  and a {Markov}-tree prior,'' \emph{IEEE Transactions on Signal Processing},
  vol.~60, no.~7, pp. 3439--3448, 2012.

\bibitem{abrardo2017message}
A.~Abrardo, M.~Barni, K.~Kallas, and B.~Tondi, ``A message passing approach for
  decision fusion of hidden-{Markov} observations in the presence of
  synchronized attacks,'' in \emph{Int. Conf. Advances in Multimedia (MMEDIA),
  Special Track on Models and Algorithms for Spatially and Temporally
  Correlated Data (STCD), Venice, Italy}, 2017.

\bibitem{bolliger2012digital}
L.~Bolliger,
\emph{Digital Estimation of Continuous-Time Signals Using Factor Graphs}, 
Hartung-Gorre Verlag, Konstanz, 
Series in Signal and Information Processing, vol. 22, 2012.

\bibitem{bolliger2013lmmse}
L.~Bolliger, H.-A. Loeliger, and C.~Vogel, ``{LMMSE} estimation and
  interpolation of continuous-time signals from discrete-time samples using
  factor graphs,'' \emph{arXiv preprint arXiv:1301.4793}, 2013.

\bibitem{bruderer2014estimation}
L.~Bruderer and H.-A. Loeliger, ``Estimation of sensor input signals that are
  neither bandlimited nor sparse,'' in \emph{2014 Information Theory and
  Applications Workshop (ITA)}.\hskip 1em plus 0.5em minus 0.4em\relax IEEE,
  2014, pp. 1--5.

\bibitem{forney2011partition}
G.~D. Forney~Jr and P.~O. Vontobel, ``Partition functions of normal factor
  graphs,'' \emph{arXiv preprint arXiv:1102.0316}, 2011.

\bibitem{gallager2013stochastic}
R.~G. Gallager, \emph{Stochastic Processes: Theory for Applications}.\hskip 1em
  plus 0.5em minus 0.4em\relax Cambridge University Press, 2013.

\bibitem{eckford2018channel}
A.~W. Eckford and P.~J. Thomas, ``The channel capacity of {Channelrhodopsin}
  and other intensity-driven signal transduction receptors,'' \emph{IEEE
  Transactions on Molecular, Biological and Multi-Scale Communications},
  vol.~4, no.~1, pp. 27--38, 2018.

\bibitem{nagel2003channelrhodopsin}
G.~Nagel, T.~Szellas, W.~Huhn, S.~Kateriya, N.~Adeishvili, P.~Berthold,
  D.~Ollig, P.~Hegemann, and E.~Bamberg, ``Channelrhodopsin-2, a directly
  light-gated cation-selective membrane channel,'' \emph{Proceedings of the
  National Academy of Sciences}, vol. 100, no.~24, pp. 13\,940--13\,945, 2003.

\bibitem{goetz2019review}
D.~Goetz and C.~L. Ren, ``Review of cystic fibrosis,'' \emph{Pediatric Annals},
  vol.~48, no.~4, pp. e154--e161, 2019.

\bibitem{gillespie1977exact}
D.~T. Gillespie, ``Exact stochastic simulation of coupled chemical reactions,''
  \emph{The Journal of Physical Chemistry}, vol.~81, no.~25, pp. 2340--2361,
  1977.

\bibitem{gillespie2007stochastic}
------, ``Stochastic simulation of chemical kinetics,'' \emph{Annual Review of
  Physical Chemistry}, vol.~58, pp. 35--55, 2007.

\bibitem{wilkinson2018stochastic}
D.~J. Wilkinson, \emph{Stochastic Modelling for Systems Biology}.\hskip 1em
  plus 0.5em minus 0.4em\relax Chapman and Hall/CRC, 2018.

\end{thebibliography}

\end{document}